\newcommand\e{\operatorname{e}}
 \theoremstyle{mdpi}
 \newcounter{thm}
 \newcounter{ex}
 \newcounter{re}
 \newtheorem{Theorem}[thm]{Theorem}
 \theoremstyle{mdpidefinition}
 \newtheorem{Definition}[thm]{Definition}
\address[1]{%
$^{1}$ \quad Tata Institute of Fundamental Research; manoj.gopalkrishnan@gmail.com}
\abstract{We present a KL-control treatment of the fundamental problem of erasing a bit. We introduce notions of \textbf{reliability} of information storage via a reliability timescale $\tau_r$, and \textbf{speed} of erasing via an erasing timescale $\tau_e$.  Our problem formulation captures the tradeoff between speed, reliability, and the Kullback-Leibler (KL) cost required to erase a bit. We show that rapid erasing of a reliable bit costs at least $\log 2 - \log\left(1 - \operatorname{e}^{-\frac{\tau_e}{\tau_r}}\right) > \log 2$, which goes to  $\frac{1}{2} \log\frac{2\tau_r}{\tau_e}$ when $\tau_r>>\tau_e$.}
\begin{document}

\section{Motivation}
Biological systems are remarkably ordered at multiple scales and dimensions, from the spatial order witnessed in the packing of DNA inside the nucleus, the arrangement of cells to form tissues, and organs, and whole organisms, to the temporal order witnessed in the execution of various cellular processes. Superficially such order might appear to violate the second law of thermodynamics which requires an increase in disorder with overwhelmingly high probability. In fact, there is no violation since biological systems expend energy to bring about and maintain this increase in order.

We would like to understand this ``energy to order'' conversion quantitatively. What are the fundamental limits to this conversion? Order can be measured in terms of information, by counting the number of bits required to describe that order. From this point of view, understanding how much energy is required to create order becomes an instance of the investigation of the connection between information processing and thermodynamics. The basic information processing operation that increases order is the operation of ``erasing'' or resetting a bit to state 0. To fix ideas, imagine erasing random chalk marks from a blackboard, to leave it in a neat and ordered state.

Szilard~\cite{Szilard1929} and later Landauer~\cite{landauer61irreversibility} have argued from the second law of thermodynamics that erasing at temperature $T$ requires at least $k_B T\log 2$ units of energy, where $k_B$ is Boltzmann's constant. The \textbf{Szilard engine} is a simple illustration of this result. Imagine a single molecule of ideal gas in a cylindrical vessel. If this molecule is in the left half of the vessel, think of that as encoding the bit ``0,'' and the bit ``1'' otherwise. Erasing this Brownian bit corresponds to ensuring that the molecule lies on the left half, for example by compressing the ideal gas to half its volume. For a heuristic analysis we may use the ideal gas law $P V = k_B T$, integrating the expression $dW = - P dV$ for work from limits $V$ to $V/2$ to obtain $W = k_B T\log 2$. More rigorous and general versions of this calculation are known, which also clarify why this is a lower bound~\cite{esposito2011second,gopalkrishnan2013hot,reeb2013proving}.

In practice, one finds that both man-made and biological instrumentation often require energy substantially more than $k_B T\log 2$ to perform erasing~\cite{laughlin1998metabolic,mudge2001power}. John von Neumann remarked on this large gap in his 1949 lectures at the University of Illinois~\cite{neumann66theory}. (Bennett~\cite{bennett82thermodynamics} has remarked that DNA polymerases come close to the $k_B T \log 2$ bound. To copy a single base, a DNA polymerase hydrolyzes a triphosphate molecule to a monophosphate, which provides close to $18 k_B T$ at temperature $T=300 K$. Note that this is still almost two orders of magnitude away from $k_BT\log 2$. Further, it is not clear whether the comparison is valid at all since copying and erasing are different operations.)

How does one explain this large gap? Note that the result of $k_B T\log 2$ holds only in the isothermal limit, which takes infinite time. In practice, we want erasing to be performed rapidly, say in time $\tau_e$, which requires extra entropy production. For intuition, suppose one wants to compress a gas in finite time $\tau_e$. The gas heats up, and pushes back, increasing the work required.

Several groups~\cite{aurell2012refined,diana2013finite,zulkowski2014optimal} have recognized that rapid erasing requires entropy production which pushes up the cost of erasing beyond $k_B T\log 2$, and have obtained bounds for this problem. A grossly oversimplified, yet qualitatively accurate, sketch of these various results is obtained by considering the energy cost of compressing the Szilard engine rapidly. Specializing a result from finite-time thermodynamics~\cite{salamon1981finite} to the case of the Szilard engine, one obtains an energy cost $\left(1 + \frac{ k_B \log 2}{\sigma \tau_e - k_B\log 2}\right)k_B T\log 2$ where $\sigma$ is the coefficient of heat conductivity of the vessel. 

The bounds obtained by such considerations depend on technological parameters like the heat conductivity $\sigma$, and not just on fundamental constants of physics and the requirement specifications of the problem. If one varies over the technological parameters as well, e.g. allowing $\sigma\to\infty$, the energy cost tends to $k_B T\log 2$. Does there exist a more fundamental analysis for the cost of erasing that is independent of technological parameters, and improves on $k_B T\log 2$? This is the open question we address in this paper.

\textbf{Our contribution:} We follow up on von Neumann's suggestion~\cite{neumann66theory} that the gap was ``due to something like a desire for reliability of operation.'' Swanson~\cite{swanson1960physical} and Alicki~\cite{alicki2014information} have also looked into issues of reliability. We introduce the notion of ``reliability timescale'' $\tau_r$, and explicitly consider the three-way trade-off between speed, reliability, and cost.

The other novelty of our approach is in bringing the tools of Kullback-Leibler (KL) control~\cite{todorov2009efficient,fleming1982optimal} to bear on the problem of erasing a bit. The intuitive idea is that the control can reshape the dynamics as it pleases, but pays for the deviation from the uncontrolled dynamics. The cost of reshaping the dynamics is a relative entropy or KL divergence between the controlled and the uncontrolled dynamics, expressed as measures on path space. 
%
%
%
%
%
%

We find the optimal control for rapid erasing of a reliable bit, and argue that it requires cost of at least $\log 2 - \log\left(1 - \operatorname{e}^{-\frac{\tau_e}{\tau_r}}\right) > \log 2$, which goes to  $\frac{1}{2} \log\frac{2\tau_r}{\tau_e}$ when $\tau_r>>\tau_e$. Importantly, our answer does not depend on any technological parameters, but only on the requirement specifications $\tau_r$ and $\tau_e$ of the problem.

\section{The Erasing Problem}

As a model of a bit, consider a two-state continuous-time Markov chain with states $0$ and $1$ and the \textbf{passive} or uncontrolled dynamics given by transition rates $k_{01}$ from state $0$ to state $1$ and $k_{10}$ from state $1$ to state $0$.
\begin{center}
\begingroup%
  \makeatletter%
  \providecommand\color[2][]{%
    \errmessage{(Inkscape) Color is used for the text in Inkscape, but the package 'color.sty' is not loaded}%
    \renewcommand\color[2][]{}%
  }%
  \providecommand\transparent[1]{%
    \errmessage{(Inkscape) Transparency is used (non-zero) for the text in Inkscape, but the package 'transparent.sty' is not loaded}%
    \renewcommand\transparent[1]{}%
  }%
  \providecommand\rotatebox[2]{#2}%
  \ifx\svgwidth\undefined%
    \setlength{\unitlength}{103.28046875bp}%
    \ifx\svgscale\undefined%
      \relax%
    \else%
      \setlength{\unitlength}{\unitlength * \real{\svgscale}}%
    \fi%
  \else%
    \setlength{\unitlength}{\svgwidth}%
  \fi%
  \global\let\svgwidth\undefined%
  \global\let\svgscale\undefined%
  \makeatother%
  \begin{picture}(1,0.29361795)%
    \put(0,0){\includegraphics[width=\unitlength]{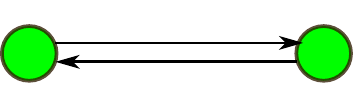}}%
    \put(0.4045713,0.22095477){\color[rgb]{0,0,0}\makebox(0,0)[lb]{\smash{$k_{01}$}}}%
    \put(0.4200631,0.01956142){\color[rgb]{0,0,0}\makebox(0,0)[lb]{\smash{$k_{10}$}}}%
    \put(0.057126,0.1134737){\color[rgb]{0,0,0}\makebox(0,0)[lb]{\smash{$0$}}}%
    \put(0.87431826,0.1134737){\color[rgb]{0,0,0}\makebox(0,0)[lb]{\smash{$1$}}}%
  \end{picture}%
\endgroup%

\end{center}
The transition rates $k_{01}$ and $k_{10}$ model spontaneous transitions between the states when no one is looking at the bit or trying to erase it. The time independence of these rates represents the physical fact that the system is not being driven.

Such finite Markov chain models often arise in physics by ``coarse-graining.'' For example, for the case of the Szilard engine, the transition rate $k_{10}$ models the rate at which the molecule enters the left side, conditioned on it currently being on the right side.

Apart from their importance in approximating the behavior of real physical systems, finite Markov chains are also important to thermodynamics from a logical point of view. They may be viewed as finite models of a mathematical theory of thermodynamics. The terms ``theory'' and ``model'' are to be understood in their technical sense as used in mathematical logic. We develop this remark no further here since doing so would take us far afield.

Suppose the distribution at time $t$ is $(p_0(t),p_1(t))$ with $p_1(t) = 1-p_0(t)$. Then the time evolution of the bit is described by the ODE

\begin{align}\label{eqn:passive}
\dot{p}_0(t) = -k_{01}p_0(t) + k_{10}(1-p_0(t)).
\end{align}
Setting $\pi_0 = k_{10}/(k_{01}+k_{10})$ and the \textbf{reliability timescale} $\tau_r := 1/(k_{01} + k_{10})$, this admits the solution

\begin{align}\label{eqn:soln}
	p_0(t) =  \pi_0 + \e^{-t/\tau_r}( p_0(0) - \pi_0)
\end{align}
Here $\tau_r$ represents the time scale on which memory is preserved. The smaller the rates $k_{01}$ and $k_{10}$, the larger is the value of $\tau_r$, and the slower the decay to equilibrium, so that the system remembers information for longer.

Fix a \textbf{required erasing time} $\tau_e$. Fix $p(0) = \pi_0$. We want to control the dynamics with transition rates $u_{01}(t)$ and $u_{10}(t)$ to achieve $p(\tau_e) = (1,0)$, where

\begin{align}\label{eqn:control}
\dot{p}_0(t) &= -u_{01}(t)p_0(t) + u_{10}(t)(1-p_0(t))
\end{align}
We want to find the cost of the optimal protocol $u_{01}^*(t)$ and $u_{10}^*(t)$ to achieve this objective, according to a cost function which we introduce next. In particular, when $k_{01} = k_{10}$, the equilibrium distribution $\pi = (\pi_0,1-\pi_0)$ takes the value $(1/2,1/2)$ and we can interpret this task as erasing a bit of reliability $\tau_r=1/(k_{01}+k_{10})$ in time $\tau_e$.

\subsection{Kullback Leibler Cost}

\label{def:2stateenergy} Define the \textbf{path space} $\mathcal{P} := \{0,1\}^{[0,\tau_e]}$ of the two-state Markov chain. This is the set of all paths in the time interval $[0,\tau_e]$ that jump between states $0$ and $1$ of the Markov chain. Each path can also be succinctly described by its initial state, and the times at which jumps occur. We can also effectively think of the path space as the limit as $h\to 0$ of the space $\mathcal{P}_h:=\{0,1\}^{\{0,h,2h,\dots, Nh=\tau_e\}}$ corresponding to the discrete-time Markov chain that can only jump at clock ticks of $h$ units. 

Once the rates $u_{01}(t),u_{10}(t)$ and the initial distribution $p(0)=p$ for the Markov chain are fixed, there is a unique probability measure $\mu_{u,p}$ on path space which intuitively assigns to every path the probability of occurrence of that path according to the Markov chain evolution (Equation~\ref{eqn:control}) with initial conditions $p$. 

For pedagogic reasons, we first describe the discrete-time measure $\mu_{u,p}^h$ for a single path $i = (i_0,i_1,\dots, i_N)\in\mathcal{P}_h$. First we describe the transition probabilities of the discrete-time Markov chain.
For $a,b\in\{0,1\}$ with $a\neq b$, for all times $t$,
define $u_{aa}^h(t):= 1 - h u_{ab}(t)$ and 
$u_{ab}^h(t):= h u_{ab}(t)$ as the probability of jumping to $a$ 
and to $b$ respectively in the time step $t$, conditioned on being in state $a$.
Then the probability of the path $i$ under control $u$ is given by:

\[
\mu_{u,p}(i) := p_{i_0}\prod_{j=0}^{N-1} u_{i_j,i_{j+1}}^h(jh)
\]

\begin{figure}[H]
\centering
\includegraphics[scale = 0.40,trim=0 8cm 8 5cm,clip=true]{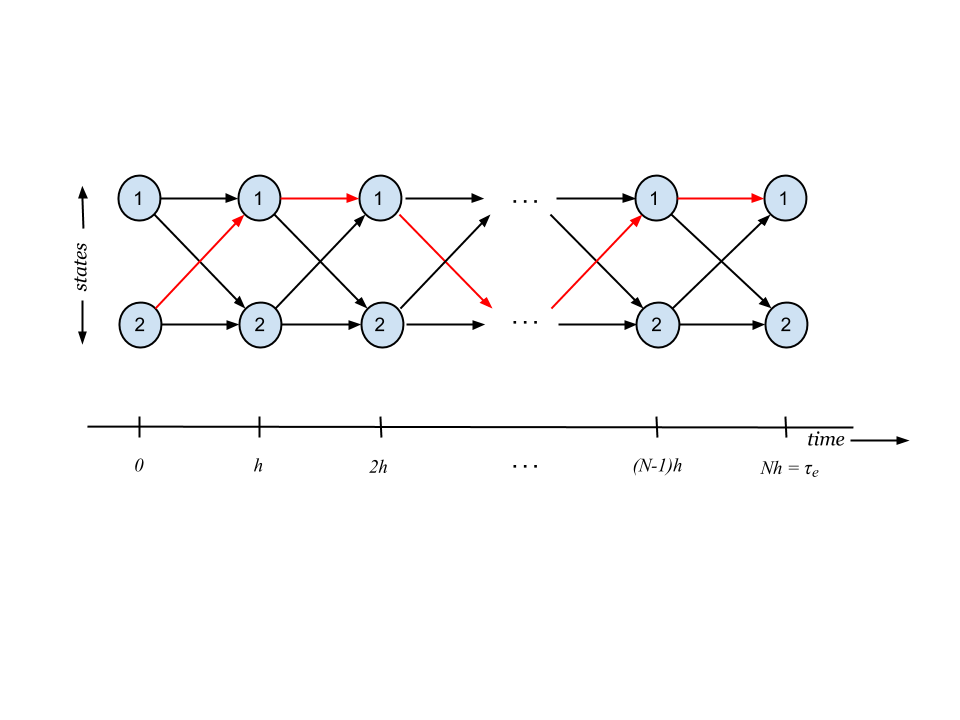}
\caption{\label{fig1} The discrete-time path space $\mathcal{P}_h$. A specific path is labeled in red.}
\end{figure}

We describe the continous-time case now. We could obtain the measure $\mu_{u,p}$ from $\mu_{u,p}^h$ by sending $h\to 0$, but it can also be described more directly. Fix $i_0\in\{0,1\}$, and consider the set of paths $\mathcal{S} = \mathcal{S}_{i_0,t_1,t_2,\dots,t_n}$ starting at $i_0$ with jumps occurring around times $t_1<t_2<\dots<t_n$ within infinitesimal intervals $dt_1,dt_2,\dots, dt_n$ and leading to the trajectory $(i_0,i_1,\dots,i_n)\in\{0,1\}^{n+1}$. Setting $t_0=0$:

\[
\mu_{u,p}(\mathcal{S}) := p_{i_0}\prod_{j=0}^{n-1} \e^{-\int_{t_j}^{t_{j+1}} u_{i_ji_{j+1}}(s) ds} u_{i_ji_{j+1}}(t_{j+1}) dt_{j+1}
\]
where $p_{i_0}$ is the probability of starting at $i_0$, $\e^{-\int_0^{t_1} u_{i_0i_1}(s) ds}$ is the probability of not jumping in the time interval $(0,t_1)$, $u_{i_0i_1}(t_1) dt_1$ is the probability of jumping from $i_0$ to $i_1$ in the interval $(t_1,t_1+dt_1)$ and so on. This is the well-known Feynam-Kac formula for this Markov chain.

Specializing to $u_{01}(t) = k_{01}$ and $u_{10}(t) = k_{10}$, we obtain the probability measure $\mu_{k,p}$ induced on $\mathcal{P}$ by the passive dynamics (Equation~\ref{eqn:passive}) with initial conditions $p$.

We declare the Kullback Leibler (KL) cost $D(\mu_{u,p}\,\|\, \mu_{k,p})$ as the cost for implementing the control $u$. More generally, for a physical system with path space $\mathcal{P}$, passive dynamics corresponding to a measure $\nu$ on $\mathcal{P}$, and a controlled dynamics with a control corresponding to a measure $\mu$ on $\mathcal{P}$, we declare $D(\mu\|\nu)$ as the cost for implementing the control. This cost function has been widely used in control theory~\cite{fleming1982optimal,kappen2005path,kappen2005linear,theodorou2011iterative,theodorou2012relative,stulp2012reinforcement,dvijotham2011unified,todorov2009efficient,kappen2012optimal,van2008graphical,wiegerinck2012stochastic,horowitz2014efficient}. In Section~\ref{Sec:PSSLP} we will explore some other interpretations of this cost function.


\section{Solution to the Erasing Problem}

Out of all controls $u(t)$, we want to find a control $u^*(t)$ that starts from 

\[
p(0)=\pi = \left(\frac{k_{10}}{(k_{01}+k_{10}}, \frac{k_{01}}{k_{01}+k_{10}}\right)
\]
and achieves $p(\tau_e)=(1,0)$ while minimizing the relative entropy $D(\mu_{u^*,\pi}\,\|\, \mu_{k,\pi})$.

\begin{align*}
	&\text{To find: } u^* = \arg\inf_u D(\mu_{u,\pi}\,\|\, \mu_{k,\pi})\
	\\&\text{Subject to: }\mu_{u,\pi}(\tau_e) = (1,0)
\end{align*}

Question~\ref{q:2erase} can be described within the framework of a well-studied problem in optimal control theory that has a closed-form solution~\cite{fleming1982optimal,todorov2009efficient,dupuis2011weak}. Following Todorov~\cite{todorov2009efficient}, we introduce the \textbf{optimal cost-to-go} function $v(t) = (v_0(t),v_1(t))$. We intend $v_i(t)$ to denote the expected cumulative cost for starting at state $i$ at time $t<\tau_e$, and reaching a distribution close to $(1, 0)$ at time $\tau_e$. 

To discourage the system from being in state $1$ at time $\tau_e$, define $v_1(\tau_e) = +\infty$ and $v_0(\tau_e) = 0$. Suppose the control performs actions $u_{01}(t)$ and $u_{10}(t)$ at time $t$. Fix a small time $h>0$. Define the transition probability $u_{ij}^h(t)$ as the probability that a trajectory starting in state $i$ at time $t$ will be found in state $j$ at time $t+h$. When $i\neq j$, $u_{ij}^h(t) \approx h u_{ij}(t)$, whereas $u_{ii}^h(t) \approx 1 - u_{ij}^h(t)$ ignoring terms of size $O(h^2)$. We define $k_{ij}^h$ similarly.

Let ``$\log$'' denote the natural logarithm. To derive the law satisfied by the optimal cost-to-go $v(t)$, we approximate $v(t)$ by the backward recursion relations:
\begin{equation}
\begin{aligned}\label{eqn:togo}
	v_0(t) &= \min_{u_{01}(t)} \mathop{\mathbb{E}} \left[ v_i(t+h) + \log\frac{u_{0i}^h(t)}{k_{0i}^h}\right]\
\\v_1(t) &= \min_{u_{10}(t)} \mathop{\mathbb{E}} \left[ v_i(t+h) + \log\frac{u_{1i}^h(t)}{k_{1i}^h}\right]
\end{aligned}
\end{equation}
where the first expectation is over $i\sim_\text{law} (u_{00}^h(t), u_{01}^h(t))$, and the second is over $i\sim_\text{law} (u_{10}^h(t),u_{11}^h(t))$, and the approximation ignores terms of size $O(h^2)$. As $h\to 0$ the second terms $\mathop{\mathbb{E}}\log\frac{u_{ji}^h(t)}{k_{ji}^h}$ approach the relative entropy cost in path space over the time interval $(t,t+h)$.

In words, Equation~\ref{eqn:togo} says that the cost-to-go from state $0$ at time $t$ equals the cost of the control $u(t)$ plus the expected cost-to-go in the new state $i$ reached at time $t+h$. The cost of the control is measured by relative entropy of the control dynamics relative to the passive dynamics, over the time interval $(t,t+h)$.

Define the \textbf{desirability} $z_0(t) = \e^{-v_0(t)}$ and $z_1(t) = \e^{-v_1(t)}$. Define 

\begin{align*}
G_0[z](t) &=  k_{00}^h z_0(t) + k_{01}^h z_1(t),\ 
\\G_1[z](t) &=  k_{10}^h z_0(t) + k_{11}^h z_1(t).
\end{align*}
We can rewrite Equation~\ref{eqn:togo} as

\begin{equation}\label{eqn:desirability}
\begin{aligned}
\log z_0(t) &= \log G_0[z](t+h) - \min_{u_{01}(t)} 
\mathop{\mathbb{E}} 
\left[ \log\frac{u_{0i}^h(t)G_0[z](t+h)}{k_{0i}^h z_i(t+h)}\right]\
\\\log z_1(t) &= \log G_1[z](t+h) - \min_{u_{10}(t)} 
\mathop{\mathbb{E}} 
\left[ \log\frac{u_{1i}^h(t)G_1[z](t+h)}{k_{1i}^h z_i(t+h)}\right]
\end{aligned}
\end{equation}

Since the last term is the relative entropy of $(u_{j0}^h(t),u_{j1}^h(t))$ relative to the probability distribution $(k_{j0}^h z_0(t+h)/G_j[z](t+h),k_{j1}^h z_1(t+h)/G_j[z](t+h))$, its minimum value is $0$, and is achieved by the protocol $u^*$ given by:

\begin{align}\label{eqn:gibbs}
\frac{u^*_{ji}(t)}{k_{ji} }=\lim_{h\to 0}\frac{ \e^{-v_i(t+h)}}{G_j[z](t+h)}=\frac{\e^{-v_i(t)}}{\e^{-v_j(t)}}
\end{align}
when $i\neq j$. 



It remains to solve for $z(t)$ and the optimal cost. From Equation~\ref{eqn:desirability}, at the optimal control $u^*$ the desirability $z(t)$ must satisfy the equation $-\log z(t) = -\log G[z](t+h) + 0$, so that:
\begin{align*}
\left(\begin{array}{c}
z_0(t)\
\\z_1(t)
\end{array}\right) = 
\left(\begin{array}{cc}
1-k_{01}h & k_{01} h\
\\k_{10} h & 1-k_{10} h
\end{array}\right) \left(\begin{array}{c}
z_0(t+h)\
\\z_1(t+h)
\end{array}\right)
\end{align*}
which simplifies to $\frac{dz}{dt} = - K z$ in the limit $h\to 0$, where $K=\tiny\left(\begin{array}{cc}
-k_{01} & k_{10} \
\\k_{10} & -k_{10} 
\end{array}\right)$ is the infinitesimal generator  of the Markov chain. This equation has the formal solution 
$z(\tau_e - t) = \e^{K t} z(\tau_e)$ where $z(\tau_e) = \tiny\left(\begin{array}{c}1\ \\0\end{array}\right)$. In the symmetric case $k_{01} = k_{10}$,
\[
	z(t) = \left(\begin{array}{c}1/2\\ 1/2\end{array}\right) + \e^{-\frac{\tau_e - t}{\tau_r}}\left(\begin{array}{c}
	1/2 \\ -1/2 \end{array}\right)
\]
where $\tau_r = 1/(k_{01}+k_{10})$. Substituting $t=0$ and taking logarithms, we find the cost-to-go function at time $0$:

\[
v(0) = 	\left(\begin{array}{c} \log 2 \\  \log 2 \end{array}\right) - \left(\begin{array}{c} \log\left(1 + \e^{-\tau_e/\tau_r}\right)\\ \log\left(1 - \e^{-\tau_e/\tau_r}\right)\end{array}\right)
\]

When $\nu(0) = (1/2,1/2)$ with $k_{01} = k_{10}$, the cost $C_\text{erase}(\tau_r,\tau_e,T)$ required for erasing a bit of reliability $\tau_r=1/(k_{01}+k_{10})$ in time $\tau_e$ is at least:

\begin{align}\label{eqn:erasingcost}
 \log 2 - \frac 1 2 \log \left(1  - \e^{-2\tau_e/\tau_r} \right)
\end{align}
Note that $C_\text{erase}\geq  \log 2$ with equality when $\tau_e/\tau_r\to \infty$, since ${1  - \e^{-2\tau_e/\tau_r} \leq 1}$.

From Equation~\ref{eqn:erasingcost},  $C_\text{erase} \geq \frac{1}{2} \log\frac{2\tau_r}{\tau_e}$ when $\tau_r >> \tau_e$. 


\section{Interpreting the KL cost}\label{Sec:PSSLP}
One motivation for our cost function comes from the field of KL control theory. We now compare other possible meanings to this cost function.

\subsection{Path space Szilard-Landauer correspondence}
The correspondence between information and thermodynamics was revealed in the work of Szilard, and clarified by Landauer. More rigorous and general treatments of this correspondence have been worked out recently~\cite{esposito2011second, gopalkrishnan2013hot, reeb2013proving}. We first recall this result, and then show how our cost function is a formal extension of this result.

Consider a physical system with finite state space $S$ and \textbf{energy} $E:S\to\mathbb{R}$. (More general state spaces $S$ can be handled by replacing the sum by an appropriate integral. For our present purposes, it suffices to assume $S$ is finite.) Define the \textbf{Gibbs distribution} $\pi$ at temperature $T$ by 

\[
\pi(i) = \frac{\e^{-E_i/k_BT}}{\sum_{j\in S} \e^{-E_j/k_BT}}
\]
for all $i\in S$. Define the \textbf{free energy} 

\[
F(p) := \sum_{i\in S} p_i E_i - k_B T \sum_{i\in S} p_i \log \frac{1}{p_i}
\] where $p$ is a probability distribution. 

Define the \textbf{relative entropy} $D(p\|q)=\sum_{i\in S} p_i \log\frac{p_i}{q_i}$ with Euler's constant for the base of the logarithm. Following Jaynes~\cite{jaynes1957information}, assume that equilibrium $\pi$ corresponds to a maximally uninformative state of the system, so that we have zero information about the system when it is at equilibrium. Recall that a nat is the unit of information when logarithms are taken to the base of Euler's constant. $1$ bit $= \log 2$ nats. Then the relative entropy $D(p\|\pi)$ has an axiomatic identification with the amount of information in nats that we know about the system when it is in a nonequilibrium state $p$~\cite{gopalkrishnan2013hot}.

The following identity is easily verified:
\begin{align}\label{SLP}
F(p) - F(\pi) = k_B T D(p \| \pi).
\end{align}
The conceptual significance of this simple identity is that it supplies a dictionary between thermodynamics and information theory~\cite{gopalkrishnan2013hot}. In particular, erasing a bit corresponds to increasing relative entropy which in turn corresponds --- via the identity --- to increasing available free energy $F(p) - F(\pi)$ by $k_B T \log 2$, recovering the classical result of Szilard as an alternative statement of the second law of thermodynamics. In the other direction, charging a battery corresponds to increasing available free energy which in turn corresponds --- via Identity~\ref{SLP} --- to erasing of information. This relates the energy efficiency of charging a battery to the energy required to erase a bit.

Now consider our cost function $D(\mu\|\nu)$. The relative entropy $D(\mu\|\nu)$ counts the number of nats erased by the control in path space, relative to the passive dynamics. Since the Szilard-Landauer principle asserts that erasing one bit requires at least $k_B T\log 2$ units of energy, our cost function may be viewed as a \textbf{Path Space Szilard-Landauer Principle}, formally extending Identity~\ref{SLP} to path space. 

\subsection{Thermodynamic interpretation} We wish to compare the cost $D(\mu\|\nu)$ with the usual thermodynamic expected work $\Delta W$. We will quickly outline how thermodynamic quantities can be defined for a two-state Markov chain. 

\subsubsection{Thermodynamics on a two-state Markov chain}
The ideas we present here are well-known in the nonequilibrium thermodynamics community, for example see Propp's thesis~\cite{propp1985thermodynamic}. The construction can be carried out more generally, but the generalization is not necessary for our present purposes.

\begin{enumerate}
\item Consider again the two-state continuous-time Markov chain with passive dynamics given by transition rates $k_{01}$ and $k_{10}$.
\begin{center}

\end{center}
Let $E_0$ and $E_1$ denote the \textbf{internal energy} of states ``0'' and ``1'' respectively. Then the equilibrium distribution is given by $\pi_0 \propto \e^{-E_0/k_BT}$ and $\pi_1 \propto \e^{-E_1/k_BT}$. We also have $k_{01} \pi_0 = k_{10} \pi_1$ from detailed balance. Together this yields 

\begin{align}\label{eqn:energy}
E_0 - E_1 = k_B T\log \frac{k_{01}}{k_{10}}.
\end{align}

\item Now consider the same two-state system with a control applied to it by means of a field of potential $\phi(t) = (\phi_0(t), \phi_1(t))$ so that the potential energy in state $i$ becomes $E_i + \phi_i(t)$. The transition rates due to the control become $u_{01}(t)$ and $u_{10}(t)$. By a reasoning similar to how we derived Equation~\ref{eqn:energy}, we get 

\[
E_0 + \phi_0 - E_1 - \phi_1 = k_BT\log\frac{u_{01}}{u_{10}}.
\] 
Combining with Equation~\ref{eqn:energy} this yields 

\begin{align}\label{eqn:potential}
\phi_0 - \phi_1 = k_B T\log \frac{u_{01}k_{10}}{u_{10}k_{01}}
\end{align}

\item Given a distribution $p = (p_0,p_1)$ on the states, we can define the following thermodynamic quantities:
\begin{itemize}
\item \textbf{Expected internal energy} $E(p) = p_0 E_0 + p_1 E_1$.
\item \textbf{Entropy} $S(p) = - p_0\log p_0 - p_1\log p_1$.
\item \textbf{Nonequilibrium free energy} $F(p) = E(p) - k_B T S(p)$.
\end{itemize}

\item Given a transition from state $i$ to state $j$ in the presence of the control field, we can define the following thermodynamic quantities:
\begin{itemize}
\item \textbf{Heat} dissipated $Q_{ij}(t) = E_i + \phi_i(t) - E_j - \phi_j(t)$.
\item \textbf{Work} done by the control $W_{ij}(t) = \phi_i(t) - \phi_j(t)$. This expression for work can be traced back to Sekimoto~\cite{sekimoto1997kinetic}, and is commonly employed in the field of Stochastic Thermodynamics to describe the work done by switching on a control field~\cite{SeifertBook}. 
\item \textbf{Entropy increase} of the system $S_{ij}(t) = \log\frac{p_i(t)}{p_j(t)}$.
\end{itemize}
The first law of thermodynamics manifests as $W_{ij} = Q_{ij} + E_i - E_j$.

\item Suppose the system is described at time $t$ by a distribution $p(t) =(p_0(t), p_1(t))$. Define the \textbf{Current} $J_{ij}(t) = p_i(t) u_{ij}(t) - p_j(t) u_{ji}(t)$ so that $\dot{p}_0(t) = -J_{01}(t)$.

\item\label{eqn:thermo} We can further compute
\begin{align*}
\frac{dE}{dt} &= J_{01}(t)(E_1(t) - E_0(t))=k_BTJ_{01}(t)\log\frac{k_{10}}{k_{01}}\
\\\frac{dW}{dt} &= J_{01}(t) W_{01}(t) = J_{01}(t)(\phi_0(t) - \phi_1(t))=k_BTJ_{01}(t)\log\frac{u_{01}(t)k_{10}}{u_{10}(t)k_{01}} \
\\\frac{dQ}{dt} &= J_{01}(t) Q_{01}(t) = J_{01}(t)(E_0 + \phi_0(t) - E_1 - \phi_1(t))= k_BT J_{01}(t)\log\frac{u_{01}(t)}{u_{10}(t)}
\\\frac{dS}{dt} &= J_{01}(t) S_{01}(t) = J_{01}(t)\log\frac{p_0(t)}{p_1(t)}\
\\\frac{dF}{dt} &= k_BT J_{01}(t) \log\frac{k_{10}p_1(t)}{k_{01}p_0(t)}
\end{align*}

\item Define \textbf{Total Entropy Production} $S_\text{tot}(t)$ to be the total entropy produced from time $0$ to time $t$. In other words, $S_\text{tot}(0)=0$ and 

\[\frac{dS_\text{tot}(t)}{dt} = \frac{1}{k_BT}\frac{dQ}{dt} + \frac{dS}{dt}.
\]
After simplification,

\begin{align}\label{eqn:EntropyProduction} 
\frac{dS_\text{tot}(t)}{dt} = (p_0(t) u_{01}(t) - p_1(t) u_{10}(t))\log\frac{p_0(t) u_{01}(t)}{p_1(t) u_{10}(t)} \geq 0
\end{align}
which is a statement of the second law of thermodynamics.

\item The following identity is immediate
\begin{align*}
	\frac{dW}{dt} = \frac{dF}{dt}+ \frac{dS_\text{tot}}{dt}
\end{align*}
and is another form of the first law.
\end{enumerate}

\subsubsection{Thermodynamic cost for rapid erasing of a reliable bit}
How much does it cost for rapid erasing of a reliable bit, with the cost function equal to $\Delta W$? We claim that it costs $k_B \log 2$. In particular, neither the reliability timescale $\tau_r$ nor the erasing timescale $\tau_e$ appear in this answer.

Suppose we can erase a $(\tau_r, \tau_e)$ bit for work $W$. First note that $\frac{dW}{dt}$ is a function of $k_{01}/k_{10}$, $u_{01}(t)/u_{10}(t)$ and $J_{01}(t)$ as in (\ref{eqn:thermo}). In particular, simultaneously sending the rates $k_{01}$ and $k_{10}$ as low as possible while keeping their ratio the same has no effect on the work. So if we can erase a $(\tau_r, \tau_e)$ bit for work $W$, then we can erase a $(A\tau_r, \tau_e)$ bit for work $W$, for an arbitrarily large constant $A$. In particular it is enough for us to demonstrate a protocol when $\tau_r=1$.

Now note that $\frac{dW}{dt}$ depends only on the ratio $u_{01}(t)/u_{10}(t)$ and not on the actual values of the rates. We can also erase a $(\tau_r, \tau_e/2)$ bit for work $W$ by taking the $(\tau_r, \tau_e)$-protocol $(u_{ij}(t))$ and defining new rates $v_{ij}(t) = 2u_{ij}(2t)$. Since $v_{01}(t)/v_{10}(t) = u_{01}(2t)/ u_{10}(2t)$, it follows from a simple calculation that the work required does not change.

By taking a limit of this time scaling argument, we only need to erase a $(1,+\infty)$ bit. Here the infinite-time isothermal protocol, which proceeds by raising the `1' well infinitesimally, waiting for the system to equilibrate, and repeating, erases for a total work of $k_BT\log 2$ since that is the free energy difference between the initial and final state, and there is no extra dissipation. This establishes our claim.

A more detailed version of this calculation can be found in \cite{Browne2014Guaranteed}. This work assumes that there is a maximum energy limit $E_\text{max}$ to which a state can be raised, so that there will be some small error to erasing. It also makes another assumption about thermalization timescale which translates in our setting to assuming that there is a maximum value to the rates $u_{01}(t)$ and $u_{10}(t)$. With these assumptions, they show that the cost of rapid erasing  is slightly more than $k_B \log 2$ and goes to $k_B T \log 2$ very quickly as the timescale of thermal relaxation becomes smaller and $E_\text{max}$ goes to infinity.

\subsubsection{Link between KL-cost and thermodynamic work}
We will now characterize entropy production in terms of time reversal. This will allow us to make a link between KL-cost and the thermodynamic work $W$. 

We first recall the notion of \textbf{time reversal} of a Markov chain. Usually time-reversal is defined for time-homogeneous Markov chains. However, for the purposes of characterizing entropy production in terms of time reversal, we will work with a definition that applies to time-inhomogeneous Markov chains also. Instead of giving this definition in full generality, we work with a Markov chain with a finite state space. This is sufficient for our purposes, and allows us to avoid dealing with certain technical issues.

Note that given a matrix $U$ with positive non-diagonal entries and $U\cdot 1 = 0$, there is a \textbf{nonnegative} vector $v$ such that $v\cdot U = 0$. This can be shown by applying the Perron-Frobenius theorem to the exponential matrix $\e^U$.

\begin{Definition}[Time-reversal]\label{def:timereversal}
Consider a continuous-time time-inhomogeneous Markov chain with state space $[n] = \{1,2,\dots,n\}$ described by a time-dependent transition matrix $U(t) = (u_{ij}(t))_{i,j\in[n]}$ (so that at time $t$, $u_{ij}(t)$ denotes the rate of jumping to state $j$ given that the system is in state $i$). Let $\pi(t)$ be a sequence of stationary probability distributions on $[n]$, i.e., $\pi(t) U(t) = 0$ and $\pi_i\geq 0$ for all $i\in[n]$ and $\sum_i \pi_i = 1$. Then the \textbf{time-reversal} Markov chain is described by the time-dependent transition matrix $\hat{U}(t) = (\hat{u}_{ij}(t))_{i,j\in[n]}$ where

\[
 \hat{u}_{ij}(t) = \frac{\pi_i(t) u_{ij}(t)}{\pi_i(t)}
\] 
A Markov chain is \textbf{reversible} iff $\hat{U} = U$.
\end{Definition}

The justification for considering time-reversal comes from Bayes' rule. Reversible Markov chains are well-known to be characterized by the conditions of existence of a detailed balanced equilibrium, as well as by the Kolmogorov chain conditions. In particular, two-state Markov chains are always reversible.

For the special case of Equation~\ref{eqn:control} in particular, given a distribution $q$ at time $\tau_e$, the time reversal Markov chain evolves in time according to the ODE:

\begin{equation}
\begin{aligned}\label{eqn:revcontrol}
\dot{q}_0(t) &=  - u_{01}(t)q_0(t) + u_{10}(t)(1-q_0(t)),\
\\ q(\tau_e) &= q.
\end{aligned}
\end{equation}
We see that the difference is that in Equation~\ref{eqn:control}, the boundary condition was specified at time $0$, whereas here the boundary condition is specified at time $\tau_e$.

We define the \textbf{time-reversed measure} $\mu_{u,q}^\text{rev}$ as the measure on path space corresponding to the process described by Equation~\ref{eqn:revcontrol}. Strictly speaking, we should write $\mu_{u,q,\tau_e}^\text{rev}$ to denote the time at which the boundary condition is provided to the differential equation, but we will avoid this by using the convention that we're always going to set the boundary condition at time $\tau_e$ when considering the time-reversal.

The following result is key to our comparison.

\begin{Theorem}\label{thm:irrevdiss}
Run the control dynamics Equation~\ref{eqn:control} forward from initial condition $p(0)$ upto time $\tau_e$ to obtain the distribution $p(\tau_e)$. Consider the measure $\mu_{u,p(\tau_e)}^\text{rev}$. Then the total entropy production $S_\text{tot}(\tau_e)$ from time $0$ to time $\tau_e$ equals 

\[
S_\text{tot}(\tau_e) = D(\mu_{u,p(0)} \|\mu_{u,p(\tau_e)}^\text{rev}).
\]
\end{Theorem}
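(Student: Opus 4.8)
The plan is to compute the density $d\mu_{u,p(0)}/d\mu_{u,p(\tau_e)}^{\text{rev}}$ trajectory by trajectory from the Feynman--Kac formulas, take its logarithm, average it against $\mu_{u,p(0)}$ (which by definition yields $D(\mu_{u,p(0)}\|\mu_{u,p(\tau_e)}^{\text{rev}})$), and recognize the result as the integrated system-entropy change plus the integrated heat, i.e.\ as $S_\text{tot}(\tau_e)$.

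First I would write both measures on the cylinder set $\mathcal{S}=\mathcal{S}_{i_0,t_1,\dots,t_n}$ of paths that start at $i_0$, jump near $t_1<\dots<t_n$, and end in state $i_n$. For $\mu_{u,p(0)}$ this is the Feynman--Kac product already recorded in the text: the initial weight $p_{i_0}(0)$, the waiting-time exponentials over the sojourn intervals, and the jump factors $u_{i_{j-1}i_j}(t_j)$. For the time-reversed measure $\mu_{u,p(\tau_e)}^{\text{rev}}$ of Equation~\ref{eqn:revcontrol} one builds the analogous product reading the trajectory from $\tau_e$ backwards: the weight is now $p_{i_n}(\tau_e)$, the waiting-time exponentials are \emph{identical} (they only involve escape rates integrated over the same sojourn intervals, and the two-state chain is reversible so its reversal uses the same rates), and the jump at $t_j$ contributes $u_{i_j i_{j-1}}(t_j)$ in place of $u_{i_{j-1}i_j}(t_j)$. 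Dividing, all waiting-time factors and the $dt_1\cdots dt_n$ cancel, leaving
\[
\log\frac{d\mu_{u,p(0)}}{d\mu_{u,p(\tau_e)}^{\text{rev}}}(\omega)=\log\frac{p_{\omega(0)}(0)}{p_{\omega(\tau_e)}(\tau_e)}+\sum_{j=1}^{n}\log\frac{u_{i_{j-1}i_j}(t_j)}{u_{i_j i_{j-1}}(t_j)}.
\]

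Next I would take the expectation over $\omega\sim\mu_{u,p(0)}$, whose marginals at times $0$ and $\tau_e$ are $p(0)$ and $p(\tau_e)$. The boundary term averages to $\sum_i p_i(0)\log p_i(0)-\sum_i p_i(\tau_e)\log p_i(\tau_e)=S(p(\tau_e))-S(p(0))=\int_0^{\tau_e}\frac{dS}{dt}\,dt$. For the jump sum I would use a standard mean-measure identity for the chain's jumps: the trajectory makes a $0\to1$ jump at time $t$ at expected rate $p_0(t)u_{01}(t)$ and a $1\to0$ jump at expected rate $p_1(t)u_{10}(t)$, while such jumps contribute $\log\frac{u_{01}(t)}{u_{10}(t)}$ and $-\log\frac{u_{01}(t)}{u_{10}(t)}$ respectively; hence the jump sum averages to $\int_0^{\tau_e}\big(p_0u_{01}-p_1u_{10}\big)\log\frac{u_{01}}{u_{10}}\,dt=\int_0^{\tau_e}J_{01}(t)\log\frac{u_{01}(t)}{u_{10}(t)}\,dt$, which by the heat formula in item~\ref{eqn:thermo} is $\int_0^{\tau_e}\frac{1}{k_BT}\frac{dQ}{dt}\,dt$. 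Adding the two contributions gives $\int_0^{\tau_e}\big(\frac{1}{k_BT}\frac{dQ}{dt}+\frac{dS}{dt}\big)\,dt$, which is exactly $S_\text{tot}(\tau_e)$ by the definition of total entropy production; equivalently, the integrand coincides with the right-hand side of Equation~\ref{eqn:EntropyProduction}.

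The step I expect to be the main obstacle is pinning down the time-reversed density and justifying the averaging. One must argue carefully that $\mu_{u,p(\tau_e)}^{\text{rev}}$ reuses exactly the same escape-rate exponentials as the forward measure, so that they cancel on the nose and no limiting argument is needed; this rests on reversibility of the two-state chain, which makes the time-reversal of Definition~\ref{def:timereversal} share the forward rates. One must also justify interchanging the sum/integral over the (almost surely finite, but unboundedly many) jump configurations with the expectation, e.g.\ by truncating the number of jumps and using monotone/dominated convergence. A minor but real subtlety is the case $p_i(\tau_e)=0$ for some state $i$ --- as in the erasing problem, where $p(\tau_e)=(1,0)$: here trajectories ending in that state have $\mu_{u,p(0)}$-probability zero while $\mu_{u,p(0)}\ll\mu_{u,p(\tau_e)}^{\text{rev}}$ still holds, so the relative entropy remains well defined and the computation above is unaffected.
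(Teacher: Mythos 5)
Your proof is correct, but it takes a genuinely different route from the paper's. The paper argues \emph{locally in time}: it writes down the four transition probabilities over an infinitesimal interval $(t,t+h)$ under $\mu_{u,p(0)}$ and under $\mu_{u,p(\tau_e)}^{\text{rev}}$ (using $\mu^{\text{rev}}(i\to j)=h\,p_j(t+h)u_{ji}(t)$), expands the increment of the relative entropy to order $h$, shows the ``diagonal'' survival terms contribute only $o(h)$, and identifies the surviving terms with $h\,\frac{dS_{\text{tot}}}{dt}$ as in Equation~\ref{eqn:EntropyProduction}; integrating over $[0,\tau_e]$ finishes the proof. You instead compute the Radon--Nikodym derivative $d\mu_{u,p(0)}/d\mu_{u,p(\tau_e)}^{\text{rev}}$ \emph{globally, trajectory by trajectory}, from the Feynman--Kac products, cancel the sojourn exponentials using reversibility of the two-state chain, and average the resulting log-ratio (boundary term plus jump sum) against the forward measure. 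This is the standard stochastic-thermodynamics identity in the style of Crooks and Lebowitz--Spohn, and it yields something strictly stronger than the theorem: the log-density itself is the fluctuating entropy production along a single trajectory, whose mean is $S_{\text{tot}}(\tau_e)$; it also makes the absolute-continuity issue at $p_1(\tau_e)=0$ explicit, which the paper's local expansion glosses over. The price is that you must justify the pathwise density formula for the reversed measure and the interchange of expectation with the sum over jump configurations, whereas the paper's derivation needs only elementary $O(h^2)$ bookkeeping and the chain rule for relative entropy, and produces the integrand of Equation~\ref{eqn:EntropyProduction} directly. Both arguments rest on the same two facts --- that the reversal reuses the forward rates and that the jump-ratio plus boundary terms reproduce $\frac{1}{k_BT}\frac{dQ}{dt}+\frac{dS}{dt}$ --- so your proof is a legitimate, and arguably more informative, alternative.
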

\begin{proof}
We will show that the time derivative of the RHS equals the rate of entropy production. This will prove the theorem.

Fix a time $t\in[0,\tau_e]$. Let the probability distribution at time $t$ be represented by $p(t)=(p_0(t), p_1(t))$. Let $f_{ij}:= p_i(t) u_{ij}(t)$ denote the flow rate from state $i$ to state $j\neq i$ at time $t$. Then $p_i(t+h) = p_i(t) + h(f_{ji} - f_{ij}) + o(h)$ where $i\neq j$ and $o(h)$ denotes terms $g(h)$ such that $\lim_{h\to 0} g(h)/h \to 0$.

We will consider the probabilities of the four Markov chain transitions $0\to 0$, $0\to 1$, $1\to 0$ and $1\to 1$ in the interval $(t,t+h)$ in the limit $h\to 0$ according to $\mu_{u,p(0)}$ and according to $\mu_{u,p(\tau_e)}^\text{rev}$. Up to terms of size $o(h^2)$, we have for $i\neq j$:

\begin{align*}
\mu_{u,p(0)}(i\to j) &= h p_i(t) u_{ij}(t)=hf_{ij}\
\\\mu_{u,p(\tau_e)}^\text{rev}(i\to j) &= h p_j(t+h) u_{ji}(t) = h(f_{ji} + h(f_{ij}-f_{ji}))
\end{align*}
The increment in the relative entropy in the time interval $(t,t+h)$ equals, upto $o(h^2)$ terms:

\begin{align*}
	&(1-hf_{01})\log\frac{1-hf_{01}}{ 1-hf_{10} -h^2(f_{01}-f_{10})} + hf_{01}\log\frac{hf_{01}}{hf_{10} + h^2(f_{01}-f_{10})}\
	\\&+ hf_{10}\log\frac{hf_{10}}{hf_{01} + h^2(f_{10}-f_{01})} + (1 - hf_{10})\log\frac{1 - hf_{10}}{1-hf_{01} - h^2(f_{10}-f_{01})}
\end{align*}
The off-diagonal terms contribute:

\begin{align*}
	&(1-hf_{01})\log\frac{1-hf_{01}}{ 1-hf_{10} -h^2(f_{01}-f_{10})}+(1 - hf_{10})\log\frac{1 - hf_{10}}{1-hf_{01} - h^2(f_{10}-f_{01})}\
	\\&\approx(1 + o(h))(hf_{10}-hf_{01} + o(h^2)) + (1 + o(h))( hf_{01} - hf_{10} + o(h^2))\
	\\& \approx o(h^2) 
\end{align*}
Divide by $h$, and take the limit $h\to 0$. We can ignore the off-diagonal terms. The diagonal terms sum to the rate of entropy production $\frac{dS_{\text{tot}}}{dt}$ as in Equation~\ref{eqn:EntropyProduction}, and we are done.
\end{proof}

By the First Law of Thermodynamics and Theorem~\ref{thm:irrevdiss},

\begin{align}\label{eqn:firstlaw}
	\Delta W = \Delta F + k_B T D(\mu_{u,p}\| \mu_{u,p(\tau_e)}^\text{rev})
\end{align}
where the increase in free energy of the system
\[\Delta F = k_B T\left(D(p(\tau_e)\|\pi)- D(p(0)\| \pi)\right)\] by Equation~\ref{SLP}. 
Now to compare our cost function with $\Delta W$. 

\begin{Theorem}
The $KL$-cost equals change in free energy by $k_B T$ plus a path-space relative entropy term that resembles entropy production:

\begin{align}\label{eqn:newfirstlaw}
k_B T D(\mu_{u,p}\,\|\, \mu_{k,p}) = \Delta F + k_B T D(\mu_{u,p} \| \mu_{k,p(\tau_e)}^\text{rev})
\end{align}
where $p(\tau_e)$ is --- as in Equation~\ref{eqn:firstlaw} --- the solution to the control dynamics Equation~\ref{eqn:control} at time $\tau_e$. 
\end{Theorem}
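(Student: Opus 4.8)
The cleanest route is a pathwise computation built on Theorem~\ref{thm:irrevdiss}. Using the multiplicative chain rule for Radon--Nikodym derivatives on path space,
\[
\frac{d\mu_{u,p}}{d\mu_{k,p}}=\frac{d\mu_{u,p}}{d\mu_{k,p(\tau_e)}^\text{rev}}\cdot\frac{d\mu_{k,p(\tau_e)}^\text{rev}}{d\mu_{k,p}},
\]
taking logarithms and averaging against $\mu_{u,p}$ gives $D(\mu_{u,p}\,\|\,\mu_{k,p})=D(\mu_{u,p}\,\|\,\mu_{k,p(\tau_e)}^\text{rev})+\mathbb{E}_{\mu_{u,p}}\!\left[\log\frac{d\mu_{k,p(\tau_e)}^\text{rev}}{d\mu_{k,p}}\right]$. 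Thus Equation~\ref{eqn:newfirstlaw} is equivalent to the claim
\[
k_BT\,\mathbb{E}_{\mu_{u,p}}\!\left[\log\frac{d\mu_{k,p(\tau_e)}^\text{rev}}{d\mu_{k,p}}\right]=\Delta F,
\]
and the point is that this last quantity, unlike the $KL$-cost itself, sees only the passive dynamics and its time-reversal, so it ought to collapse to a free-energy difference via detailed balance and Identity~\ref{SLP}.

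To prove the claim, observe that $\mu_{k,p}$ and $\mu_{k,p(\tau_e)}^\text{rev}$ are both ``passive'' path measures issued from the law $p(0)$, differing only in their transition rates: the plain rates $k_{ij}$ for the former and the time-reversed rates $\hat k_{ij}(t)=p_j(t)k_{ji}/p_i(t)$ of Definition~\ref{def:timereversal} for the latter. By the Feynman--Kac formula the log-density $\log\frac{d\mu_{k,p(\tau_e)}^\text{rev}}{d\mu_{k,p}}$ is then a sum over the jumps of the path of $\log\!\big(\hat k_{ij}(t_\ell)/k_{ij}\big)$ plus the time integral of $k_{\sigma\bar\sigma}-\hat k_{\sigma\bar\sigma}$ (here $\bar\sigma$ is the state other than $\sigma$, so $k_{\sigma\bar\sigma}$ is the exit rate from $\sigma$), the initial-law factors cancelling. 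Averaging against $\mu_{u,p}$, whose marginal at time $t$ is $p(t)$ and in which transitions $i\to j$ occur at the ensemble rate $p_i(t)u_{ij}(t)$: the sojourn term contributes $\int_0^{\tau_e}\sum_i p_i(t)\big(k_{i\bar i}-\hat k_{i\bar i}(t)\big)\,dt=0$ by the flux balance $p_0(t)\hat k_{01}(t)=p_1(t)k_{10}$, $p_1(t)\hat k_{10}(t)=p_0(t)k_{01}$ (which is just detailed balance of the passive chain), while the jump term contributes
\[
\int_0^{\tau_e}\sum_{i\neq j}p_i(t)u_{ij}(t)\log\frac{\hat k_{ij}(t)}{k_{ij}}\,dt=\int_0^{\tau_e} J_{01}(t)\log\frac{k_{10}\,p_1(t)}{k_{01}\,p_0(t)}\,dt
\]
using $\hat k_{01}/k_{01}=p_1k_{10}/(p_0k_{01})$ together with $J_{01}(t)=p_0(t)u_{01}(t)-p_1(t)u_{10}(t)$. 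By the expression for $\frac{dF}{dt}$ derived in item~\ref{eqn:thermo}, the integrand on the right is exactly $\frac1{k_BT}\frac{dF}{dt}$, so the average equals $\frac1{k_BT}\big(F(p(\tau_e))-F(p(0))\big)=\Delta F/k_BT$, proving the claim and hence Equation~\ref{eqn:newfirstlaw}.

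Equivalently, one can bypass the chain rule and imitate the proof of Theorem~\ref{thm:irrevdiss} directly: write $D(\mu_{u,p}\|\mu_{k,p})$ and $D(\mu_{u,p}\|\mu_{k,p(\tau_e)}^\text{rev})$ each as $\int_0^{\tau_e}$ of a relative-entropy rate (summed over the four transitions $0\to0$, $0\to1$, $1\to0$, $1\to1$ on a slice $(t,t+h)$, the diagonal terms cancelling pairwise up to $o(h)$), and check that these two rates differ by precisely $\frac1{k_BT}\frac{dF}{dt}$; as a consistency check, the same bookkeeping with $k_{ij}$ replaced by $u_{ij}$ recovers the entropy-production rate of Equation~\ref{eqn:EntropyProduction} and hence Theorem~\ref{thm:irrevdiss}. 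Either way, the single step that needs care is identifying the Feynman--Kac density of the time-reversed passive measure $\mu_{k,p(\tau_e)}^\text{rev}$ --- its rates and its reference marginal --- so that no spurious endpoint term survives and each per-jump factor is exactly $\log(\hat k_{ij}/k_{ij})$; this is where detailed balance of the passive chain does the work, and everything past that point is routine algebra.
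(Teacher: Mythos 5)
Your overall strategy---splitting $D(\mu_{u,p}\,\|\,\mu_{k,p})$ by the multiplicative chain rule for Radon--Nikodym derivatives and evaluating the cross term $\mathbb{E}_{\mu_{u,p}}\bigl[\log(d\mu_{k,p(\tau_e)}^\text{rev}/d\mu_{k,p})\bigr]$ pathwise---is legitimate and genuinely different from the paper's argument, which never writes down a density: the paper uses Identity~\ref{SLP} to rewrite the claim as $D(\mu_{u,p}\,\|\,\mu_{k,p})+D(p(0)\,\|\,\pi)=D(\mu_{u,p}\,\|\,\mu_{k,p(\tau_e)}^\text{rev})+D(p(\tau_e)\,\|\,\pi)$ and shows both sides equal $D(\mu_{u,p}\,\|\,\mu_{k,\pi})$ by applying the chain rule for relative entropy once at time $0$ and once at time $\tau_e$, using that the stationary passive process is its own time reversal. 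Your route is longer but makes the mechanism visible, exhibiting $\Delta F$ as a path-space cross term and showing where detailed balance enters.

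The genuine gap is at the step you yourself flag as the one needing care: the identification of $\mu_{k,p(\tau_e)}^\text{rev}$. You take it to be the chain issued from the law $p(0)$ with rates $\hat k_{ij}(t)=p_j(t)k_{ji}/p_i(t)$, where $p(t)$ is the marginal of the \emph{controlled} dynamics. That is not the measure in the statement. Definition~\ref{def:timereversal} reverses a chain with respect to its stationary distribution; for the passive chain this is $\pi$, and detailed balance gives $\hat k_{ij}=\pi_j k_{ji}/\pi_i=k_{ij}$, so the reversed passive dynamics carries the passive rates, and the subscript $p(\tau_e)$ pins its marginal to $p(\tau_e)$ at time $\tau_e$, not to $p(0)$ at time $0$. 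Your candidate is anchored at the wrong endpoint and built from the wrong reference marginal: a chain started at $p(0)$ with rates $p_j(t)k_{ji}/p_i(t)$ does not have marginal $p(t)$ (the flux balance $p_0\hat k_{01}=p_1 k_{10}$ you invoke makes its generator drive the marginal by the negative of the passive drift, not by the controlled drift), so its law at time $\tau_e$ is not $p(\tau_e)$ and it cannot be the time reversal with boundary condition $p(\tau_e)$. Your computation of the cross term for the measure you actually defined is correct---the sojourn terms cancel, the jump terms integrate $\frac{1}{k_BT}\frac{dF}{dt}$ from item~\ref{eqn:thermo}, and you obtain $\Delta F/k_BT$---so you have proved a true identity of the form of Equation~\ref{eqn:newfirstlaw}, but with a third measure that differs from $\mu_{k,p(\tau_e)}^\text{rev}$. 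To close the gap you would need either to redo the Feynman--Kac bookkeeping for the correct reversed measure (forward rates $m_j(t)k_{ji}/m_i(t)$ with $m$ the passive marginal propagated backward from $p(\tau_e)$, and time-$0$ law $m(0)\neq p(0)$, so the initial factors no longer cancel and the algebra is substantially messier), or to show separately that your measure and the paper's are KL-equidistant from $\mu_{u,p}$---which is most naturally done by the paper's own chain-rule pivot through $\mu_{k,\pi}$, at which point the pathwise computation is no longer carrying the proof.
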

\begin{proof}
Using Equation~\ref{SLP}, we can rewrite the claim as

\[
D(\mu_{u,p}\,\|\, \mu_{k,p}) + D(p(0)\| \pi)=  D(\mu_{u,p} \| \mu_{k,p(\tau_e)}^\text{rev}) + D(p(\tau_e)\| \pi)
\]
Both LHS and RHS equal $D(\mu_{u,p}\,\|\, \mu_{k,\pi})$. The assertion for the LHS is straightforward. The assertion for the RHS is true because time-reversal dynamics was defined to keep the stationary distribution $\pi$ remain stationary under time reversal.
\end{proof}

Comparing \ref{eqn:firstlaw} and \ref{eqn:newfirstlaw}, a KL control treatment replaces the total entropy production $D(\mu_{u,p}\| \mu_{u,p(\tau_e)}^\text{rev})$ in \ref{eqn:firstlaw} by the new term $D(\mu_{u,p} \| \mu_{k,p(\tau_e)}^\text{rev})$ which compares the control dynamics with the time reversal of the passive dynamics. This suggests an interpretation as follows. If we applied the control during some time interval $[0,\tau_e]$, and remembered what control we applied, then the entropy production is correctly given by $D(\mu_{u,p}\| \mu_{u,p(\tau_e)}^\text{rev})$. However, the information that a control was applied also needs to be stored somewhere. If we forget that a control was applied, and if application of the control is very rare, then our default model for the dynamics should be much closer to the passive dynamics. In this case, entropy production may be closer to the value $D(\mu_{u,p} \| \mu_{k,p(\tau_e)}^\text{rev})$.


\subsection{Large deviations interpretation} Our cost function $D(\mu\|\nu)$ also admits a large deviation interpretation which was, remarkably, already noted by Schr{\"o}dinger in 1931~\cite{schrodinger1931uber,beurling1960automorphism,follmer1988random,aebi1996schrodinger}. Motivated by quantum mechanics, Schr{\"o}dinger asked: conditioned on a more or less astonishing observation of a system at two extremes of a time interval, what is the least astonishing way in which the dynamics in the interval could have proceeded? Specializing to our problem of erasing, suppose an ensemble of two-state Markov chain with passive dynamics given by Equation~\ref{eqn:passive} was observed at time $0$ and at time $\tau_e$. Suppose the empirical state distribution over the ensemble was found to be the equilibrium distribution $\pi$ at time $0$, and $(1, 0)$ at time $\tau_e$ respectively. This would be astonishing because no control has been applied, yet the ensemble has arrived at a state of higher free energy. Conditioned on this rare event having taken place, what is the least unlikely measure $\mu^*$ on path space via which the process took place?
	
By a statistical treatment of multiple single particle trajectories, Schr{\"o}dinger found that the likelihood of an empirical 
measure $\mu$ on path space falls exponentially fast with the relative entropy $D(\mu \| \nu)$ where $\nu$ is the measure induced 
by the passive dynamics. In particular, the least unlikely measure $\mu^*$ is that measure which --- among all $\mu$ whose 
marginals at time $0$ and time $\tau_e$ respect the observations --- minimizes $D(\mu \| \nu)$. So for the problem of erasing 
where $k_{01}=k_{10}$, the measure $\mu$ varies over all measures that have marginal $(1/2,1/2)$ at time $0$ and marginal $(1,0)$ 
at time $\tau_e$, and $\mu^*$ is that measure among all such $\mu$ that minimizes $D(\mu\| \mu_{k,(1/2,1/2)})$. Thus our optimal 
control produces in expectation the least surprising trajectory among all controls that perform rapid erasing.


\subsection{Gibbs measure} Equation~\ref{eqn:gibbs} is not accidental for this example, but is in fact a general feature when the cost function is relative entropy~\cite{dupuis2011weak}. More abstractly, the Radon-Nikodym derivative (i.e., ``probability density'') $\frac{d\,\mu^*}{d\,\nu}$ of the measure $\mu^*$ induced on path space by the optimal control $u^*$ with respect to the measure $\nu$ induced by the passive dynamics is a \textbf{Gibbs measure}, with the cost-to-go function $v(t)$ playing the role of an energy function. In other words, mathematically our problem is precisely the free energy minimization problem so familiar from statistical mechanics. There is also a possible physical interpretation: we are choosing paths in $\mathcal{P}$ as microstates, instead of points in phase space. The idea of paths as microstates has occurred before~\cite{wissner2013causal}.

\section{Concluding remarks}
Since charging a battery can also be thought of as erasing a bit~\cite{gopalkrishnan2013hot}, our result may also hold insights into the limits of efficiencies of rapidly charging batteries that must simultaneously hold their energy for a long time.

So long as the noise is Markovian, we conjecture that the KL cost for erasing the two-state Markov chain is a lower bound for more general cases -- for example for bits with Langevin dynamics~\cite{zwanzig2001nonequilibrium}, which is a stochastic differential equation expressing Newton's laws of motion with Brownian noise perturbations.

\vspace{6pt} 
\acknowledgments{\textbf{Acknowledgments:} I thank Sanjoy Mitter, Vivek Borkar, Nick S. Jones, Mukul Agarwal, and Krishnamurthy Dvijotham for helpful discussions. I thank Abhishek Behera for drawing Figure~\ref{fig1}.}


\bibliographystyle{mdpi}
\renewcommand\bibname{References}

%
%
%

\end{document}